\newtheorem{theorem}{Theorem}[section]
\newtheorem{lemma}[theorem]{Lemma}
\newtheorem{proposition}[theorem]{Proposition}
\newcommand\abs[1]{\lvert #1\rvert}
\newcommand{\mm}{\operatorname{mm}}
\newcommand{\bw}{\operatorname{bw}}
\newcommand{\tw}{\operatorname{tw}}
\newcommand{\mmw}{\operatorname{mmw}}
\begin{document}

\title{Computing the maximum matching width is NP-hard}
\author{Kwangjun Ahn and Jisu Jeong\thanks{
Department of Mathematical Sciences, Korea Advanced Institute of Science and Technology, Daejeon, Republic of Korea.
\texttt{kjahnkorea@kaist.ac.kr, jjisu@kaist.ac.kr}}}
\date{\vspace{-5ex}}

\maketitle

\begin{abstract}
The maximum matching width is a graph width parameter that is defined on a branch-decomposition over the vertex set of a graph. In this short paper, we prove that the problem of computing the maximum matching width is NP-hard.
\end{abstract}

\section{Introduction}

Tree-width and branch-width are two prominent graph width parameters extensively studied in both structural
graph theory and theoretical computer science due to wide applications. For instance, the seminar work by Courcelle~\cite{courcelle1990monadic} proved that many NP-hard problems in graphs can be solved in polynomial time when the graph has bounded tree-width or branch-width, proving the usefulness the width parameters.

 Recently, a new graph width parameter compatible with tree-width or branch-width was introduced, called the \emph{maximum matching width} (\emph{mm-width} for short)~\cite{vatshelle2012new}.  
 It is shown in~\cite{vatshelle2012new} that 
 if $\mmw(G)$, $\tw(G)$, and $\bw(G)$ are the mm-width, tree-width, and branch-width of a graph $G$, respectively,
 then, for every graph $G$, 
 \[
 \mmw(G) \leq \max(\bw(G), 1) \leq  \tw(G) + 1 \leq 3\mmw(G).
 \] 
 Because of the similarity between the parameters one might anticipate that the new parameter can be replaced by the existing width parameters, questioning the necessity of studying the new parameter.
On the other hand, it has been observed that the new parameter can lead to better results than others when employed to design algorithms for popular problems~\cite{kloks2005computing}.
In particular, it is recently shown that the algorithm for the Minimum Dominating Set problem based on mm-width can enjoy faster time-complexity than one based on tree-width~\cite{JST2015}.

In this short paper, we investigate the hardness of computing the mm-width. 
 While it is widely known that the problem of computing the tree-width or branch-width in general graphs are both NP-hard~\cite{arnborg1987complexity,seymour1994call}, it has not been explored whether or not computing mm-width is NP-hard. We prove that computing the mm-width is also NP-hard by leveraging recent results in~\cite{JST2015}.   

\subsection{Notations}

All graphs are simple, undirected, and finite.
For a graph $G$, let $V(G)$ and $E(G)$ denote the vertex set and edge set of the graph, respectively.
For a subset of vertices  $S\subseteq V(G)$, $N_G(S)$ is the subset of vertices that are adjacent to at least one vertex in $S$.
A tree is called \emph{nontrivial} if it has at least one edge. 
We say that a tree is \emph{ternary} if all vertices have degree $1$ or $3$.
A function $f:2^X\rightarrow \mathbb{Z}$ is \emph{symmetric} if $f(A)=f(X\setminus A)$ for all $A\subseteq X$, and
a function $f$ is \emph{submodular} if $f(A)+f(B)\ge f(A\cup B)+f(A\cap B)$ for all $A,B\subseteq X$.

\section{Preliminaries}

\subsection{Maximum matching width}

We provide a formal definition of the maximum matching width. 
A \emph{branch-decomposition} over a finite set $X$ is a pair $(T,\mathcal{L})$ of a ternary tree $T$ together with a bijection $\mathcal{L}$ from the leaves of $T$ to $X$.
When we remove an edge $ab$ of $T$, the tree is divided into two connected components, 
inducing  the partition $X_1\cup X_2$ of $X$ due to bijection $\mathcal{L}$. 
Here, we call $X_1\cup X_2$ the \emph{partition induced by $ab$}.

For an edge $e$ of $T$, and a function $f$, which is symmetric and submodular, the \emph{$f$-value} of $e$ is $f(A)(=f(B))$ where $(A,B)$ is the partition induced by $e$.
The \emph{$f$-width} of a branch-decomposition $(T,\mathcal{L})$ is the maximum of $f$-values over $E(T)$.
The \emph{$f$-width} of $X$ is the minimum value of the $f$-width over all possible branch-decompositions over $X$.

Let $\mm_G:2^{V(G)} \rightarrow \mathbb{Z}$ be a function, where $\mm_G(A)$ is defined as the size of a maximum matching in $G$ between $A$ and its complement. 
Note that the function $\mm_G$ is symmetric and submodular~\cite{DBLP:journals/corr/SaetherT14}.
The \emph{maximum matching width} of $G$, denoted by $\mmw(G)$, is the $\mm_G$-width of~$V(G)$.

Jeong, Telle, and S{\ae}ther~\cite{JST2015} gave a new characterization of graphs whose mm-width are at most $k$, formally stated as the following theorem. 

\begin{theorem}[Jeong, Telle, and S{\ae}ther~{\cite[Theorem 3.8]{JST2015}}] \label{jeongetal}
	A nontrivial graph $G$ has $mmw(G)\leq k$ if and only if there exist a ternary tree $T$ and nontrivial subtrees $T_u$ of $T$ for all vertices $u\in V(G)$ such that
	\begin{enumerate}[(1)]
		\item if $uv\in E(G)$ then the subtrees $T_u$ and $T_v$ have at least one node of $T$ in common, and
		\item for every edge of $T$ there are at most $k$ subtrees using this edge.
	\end{enumerate}
\end{theorem}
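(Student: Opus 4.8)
The plan is to prove both implications by translating the maximum-matching quantity across each cut into a minimum vertex cover via K\"{o}nig's theorem. For any edge $e$ of a branch-decomposition inducing the partition $(A,B)$, the only edges of $G$ relevant to $\mm_G(A)$ are those crossing the cut, so the bipartite graph $H_e$ on parts $A,B$ consisting of the crossing edges satisfies that $\mm_G(A)$ equals the size of a maximum matching of $H_e$, which by K\"{o}nig equals the size of a minimum vertex cover of $H_e$. Thus $\mmw(G)\le k$ is equivalent to the existence of a branch-decomposition in which every cut admits a vertex cover of size at most $k$, and I would carry both directions in this vertex-cover language.

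For the (easier) direction assuming the subtree model, I would build a branch-decomposition from the host tree $T$. Since each $T_u$ is nontrivial it contains a node $p_u$; I attach a fresh leaf $\ell_u$ to $T$ at a subdivision of an edge incident to $p_u$, then delete the original leaves of $T$ carrying no vertex and suppress the resulting degree-$2$ nodes to obtain a ternary tree $T'$ whose leaves are exactly $\{\ell_u : u\in V(G)\}$, giving a branch-decomposition $(T',\mathcal{L})$. For an edge $e'$ of $T'$ coming from an edge $e$ of $T$, I claim the set $S_e = \{w : e\in E(T_w)\}$ of subtrees using $e$ is a vertex cover of the cut $H_{e'}$: if $uv$ crosses the cut then $p_u,p_v$ lie on opposite sides of $e$, while condition (1) gives a node shared by the connected subtrees $T_u$ and $T_v$, so the concatenation of the $T_u$-part and the $T_v$-part crosses $e$, and hence $e$ is used by $T_u$ or by $T_v$. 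Therefore a minimum vertex cover of $H_{e'}$ has size at most $|S_e|\le k$ by condition (2), and the matching across $e'$ is at most $k$ by K\"{o}nig; the attached leaf edges only separate a single vertex and contribute matching at most $1\le k$. This yields $\mmw(G)\le k$.

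For the converse I would start from a branch-decomposition of width at most $k$ and, using K\"{o}nig on each cut, pick for every edge $e$ of the tree a minimum vertex cover $C_e$ with $|C_e|\le k$. The natural attempt is to declare that $T_u$ uses exactly the edges $e$ with $u\in C_e$, together with the leaf $\ell_u$ to stay nontrivial; condition (2) is then immediate from $|C_e|\le k$, and condition (1) for an adjacent pair $uv$ follows because every edge on the $\ell_u$--$\ell_v$ path is crossed by $uv$ and so has $u$ or $v$ in its cover. The hard part will be forcing each $T_u$ to be \emph{connected}: independent K\"{o}nig covers on different edges need not agree, so the edge set $\{e : u\in C_e\}$ can be disconnected, and simply taking the spanning subtree through $\ell_u$ risks loading some edge with more than $k$ subtrees. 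I expect the resolution to be a global, consistent choice of covers --- for instance assigning to each graph-edge $uv$ a single canonical endpoint responsible for covering it along the \emph{entire} path $P_{uv}$ in $T$ --- so that the covering sets become laminar along every path and $\{e:u\in C_e\}$ is automatically a subtree containing $\ell_u$. Verifying that such a consistent assignment exists while keeping every cover minimum, hence every edge-load at most $k$, is the crux of the argument.
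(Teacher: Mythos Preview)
This theorem is not proved in the present paper at all; it is quoted verbatim from \cite{JST2015} and used as a black box (see the sentence immediately following its statement). There is therefore no ``paper's own proof'' to compare your attempt against.

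On the substance of your plan: the direction from the subtree model to $\mmw(G)\le k$ is essentially sound. The K\"{o}nig reformulation is correct, and your covering argument --- that if $uv$ crosses the cut at a tree edge $e$ then the connected subtrees $T_u,T_v$, which share a node, must between them contain $e$ --- is the right mechanism. The tree surgery (subdivide, attach pendant leaves $\ell_u$, delete unused leaves, suppress degree-$2$ nodes) needs some care to make precise which edge of $T$ an edge of $T'$ corresponds to, but this is routine.

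The converse is where your proposal has a genuine gap, and you correctly flag it yourself. Your canonical-endpoint idea does buy connectivity automatically: if each graph edge $uv$ is assigned a single responsible endpoint, then $\{e:u\in C_e\}$ is a union of leaf-to-leaf paths all passing through $\ell_u$, hence a subtree. What it does \emph{not} buy is the load bound $|C_e|\le k$. Knowing that each crossing bipartite graph $H_e$ admits \emph{some} vertex cover of size $\le k$ does not give a single global orientation of $E(G)$ whose induced cover is of size $\le k$ at every tree edge simultaneously; different tree edges may want to orient the same $uv$ in opposite ways to stay minimum. Turning the per-edge K\"{o}nig covers into one coherent family requires a non-local argument (exploiting, e.g., submodularity of $\mm_G$ to make the covers along each path nested), and that is precisely the step you label ``the crux'' without supplying it. As written, the plan is a correct outline of one direction and an honest identification of the obstacle in the other, but not yet a proof.
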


A \emph{tree-representation} of a graph $G$ is a pair $(T, \{T_x\}_{x\in V(G)} )$ where $T$ is a
ternary tree and a collection $\{T_x\}_{x\in V(G)}$ of nontrivial subtrees of $T$ 
satisfying the property $(1)$. 
Theorem~\ref{jeongetal} states that a graph G has a tree-representation in which every edge of $T$ is contained in at most $k$ subtrees if and only if $\mmw(G) \leq k$.

\subsection{Helly property of subtrees}

A set system $\mathcal{F}$ is said to satisfy the \emph{Helly property} if the following holds for every subcollection $\mathcal{G}\subseteq \mathcal{F}$: 
\[
\text{if }A\cap B \neq \emptyset\text{ for all }A,B\in \mathcal{G},\text{ then }\bigcap_{A\in \mathcal{G}}A \neq \emptyset.
\]
It is well known that a collection of the node sets of subtrees of a tree satisfies the \emph{Helly property}:
\begin{proposition} \label{lem:helly}
	Let $C$ be a clique of a graph $G$ with at least two vertices. For every tree representation $(T, \{T_x\}_{x\in V(G)} )$, $\bigcap_{x\in C} V(T_x) \not = \emptyset$.
\end{proposition}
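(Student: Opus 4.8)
The plan is to recognize this proposition as a direct application of the Helly property stated just above it, so the real task splits into a routine reduction and the (well-known) Helly fact itself. First I would set up the reduction. Let $\mathcal{G} = \{V(T_x) : x \in C\}$ be the subcollection of node sets of the subtrees indexed by the clique $C$. Because $C$ is a clique, any two distinct vertices $u,v \in C$ satisfy $uv \in E(G)$, so property $(1)$ of the tree-representation guarantees $V(T_u) \cap V(T_v) \neq \emptyset$. Thus every pair of members of $\mathcal{G}$ intersects, and applying the Helly property to $\mathcal{G}$ yields $\bigcap_{x \in C} V(T_x) = \bigcap_{A \in \mathcal{G}} A \neq \emptyset$, which is exactly the claim. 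The hypothesis $\abs{C} \geq 2$ is used precisely to ensure $\mathcal{G}$ is nonempty and that at least one edge of $G$ witnesses an intersection; when $\abs{C} = 2$ the conclusion is immediate from property $(1)$ alone, and the content lies entirely in the case $\abs{C}\geq 3$.

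The substantive ingredient is therefore the Helly property for node sets of subtrees of a tree, which I would prove rather than merely cite, by the standard rooting argument. Root $T$ at an arbitrary node. For each subtree $S$ in the collection let $a_S$ denote its \emph{apex}, the unique node of $S$ closest to the root; since $S$ is connected and $T$ is a tree, every node of $S$ is a descendant of $a_S$. Now choose a subtree $S^\ast$ whose apex $a_{S^\ast}$ has maximum depth among all apices. I claim $a_{S^\ast}$ lies in every subtree $S$ of the collection. Indeed, by pairwise intersection there is a node $w \in V(S) \cap V(S^\ast)$, so $w$ is a descendant of both $a_S$ and $a_{S^\ast}$, which forces $a_S$ and $a_{S^\ast}$ to be comparable in the ancestor order; maximality of depth then makes $a_S$ an ancestor of $a_{S^\ast}$. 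Consequently $a_{S^\ast}$ lies on the root-to-$w$ path between $a_S$ and $w$, and since this path is contained in the connected subtree $S$, we obtain $a_{S^\ast} \in V(S)$. As $S$ was arbitrary, $a_{S^\ast} \in \bigcap_S V(S)$.

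The one step requiring care, and the place I expect the main obstacle in writing a fully rigorous argument, is the claim that every node of a subtree is a descendant of its apex, together with the path-containment step at the end. Both rest on the uniqueness of paths in a tree: a connected subgraph contains the entire path between any two of its nodes, so no node of $S$ can reach the root without passing through $a_S$, and the $a_S$-to-$w$ path cannot avoid the intermediate ancestor $a_{S^\ast}$. Once this uniqueness is invoked, the deepest-apex argument goes through cleanly. I would note finally that neither the ternary structure of $T$ nor the nontriviality of the subtrees is needed for this argument, so it applies verbatim to the tree-representation at hand.
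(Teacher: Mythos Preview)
Your proof is correct. The paper itself does not supply a proof of this proposition: it simply states that the Helly property for node sets of subtrees of a tree is ``well known'' and records the proposition as an immediate consequence, so there is no argument to compare against. Your reduction to the Helly property via condition~(1) of the tree-representation is exactly what the paper has in mind, and the deepest-apex argument you give for the Helly property is one of the standard proofs; it is complete and rigorous as written.
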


\section{Computing maximum matching width is NP-hard}

	A graph $G$ is called a \emph{split graph} if $V(G)$ can be partitioned into an independent set $I$ of $G$ and a clique $C$ of $G$, in which case 
	we write $C=C(G)$ and $I=I(G)$.
Inspired by~\cite{kloks2005computing}, we prove, in particular, that computing the mm-width of split graphs is NP-hard.

The following lemma characterizes the conditions satisfied by a certain type of split graphs with mm-width equal to $k$:
\begin{lemma} \label{lem:key}
Let $G$ be a split graph with $\abs{C(G)}=3k$, $k\ge 1$.
Then
$\mmw(G)=k$ if and only if
$C(G)$ can be partitioned into three subsets $C_1$, $C_2$, and $C_3$ with $\abs{C_1}=\abs{C_2}=\abs{C_3}= k$ 
such that, for each vertex $w\in I(G)$, $N_G(w)$ is contained exactly one of $C_1$, $C_2$, and $C_3$.
\end{lemma}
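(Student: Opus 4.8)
The plan is to prove both directions using the tree-representation characterization from Theorem~\ref{jeongetal}. Throughout, I would exploit the special structure of a split graph: the clique $C(G)$ has $3k$ vertices, and every vertex $w \in I(G)$ has its neighborhood $N_G(w)$ sitting inside $C(G)$ (since $I(G)$ is independent). The Helly property (Proposition~\ref{lem:helly}) applied to $C(G)$ will be the crucial tool connecting the combinatorial partition condition to the existence of a good tree-representation.

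For the backward direction, suppose such a partition $C_1, C_2, C_3$ exists. I would construct an explicit tree-representation $(T, \{T_x\})$ witnessing $\mmw(G) \le k$. The natural choice is to take $T$ to be a ``star-like'' ternary tree with a central structure having three branches, one dedicated to each $C_i$; assign the clique vertices of $C_i$ to leaves along the $i$-th branch, and set each subtree $T_u$ for $u \in C(G)$ to reach into the center so that all clique subtrees mutually intersect (as Helly forces). For each $w \in I(G)$, since $N_G(w)$ lies entirely in a single $C_i$, I can route $T_w$ along the $i$-th branch to meet exactly the subtrees of its neighbors, without crossing the central edges more than the budget allows. The key bookkeeping is to verify condition~(2): each edge of $T$ is used by at most $k$ subtrees. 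The edges near the center carry the $k$ clique subtrees of one branch plus the independent-set subtrees aimed at that branch, and the construction must be arranged so this total stays at $k$. I would also need to produce a matching decomposition or a direct counting argument showing $\mmw(G) \ge k$ is forced here as well, so that equality holds; this follows because $C(G)$ is a clique of size $3k$ and any branch-decomposition must split it, forcing an $\mm_G$-value of at least $k$ on some edge.

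For the forward direction, assume $\mmw(G) = k$ and take a tree-representation $(T, \{T_x\})$ with every edge used by at most $k$ subtrees. By Proposition~\ref{lem:helly} applied to the clique $C(G)$, all $3k$ subtrees $\{T_c : c \in C(G)\}$ share a common node, say $r \in V(T)$. Removing $r$ from the ternary tree $T$ splits it into (at most) three components; each clique subtree $T_c$ containing $r$ extends into these components, and I would use the edge-budget of $k$ on the three edges incident to $r$ to partition $C(G)$. Specifically, define $C_i$ according to which of the three edges at $r$ the subtree $T_c$ uses (assigning each $c$ to one of the three via a counting/averaging argument). Since each such edge is used by at most $k$ subtrees and there are $3k$ clique subtrees distributed among three edges, a pigeonhole argument forces each $C_i$ to have size exactly $k$. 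Finally, for $w \in I(G)$, the subtree $T_w$ must meet every $T_c$ with $c \in N_G(w)$; I would argue that the limited edge capacity forces all neighbors of $w$ into a single block $C_i$.

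The main obstacle I anticipate is the forward direction's final claim: showing that $N_G(w)$ cannot be spread across two or three of the blocks $C_i$. This requires carefully arguing that if $w$ had neighbors in two different blocks, then $T_w$ would have to traverse the central node $r$ and thereby overload one of the critical edges beyond the capacity $k$, contradicting condition~(2). Making the edge-counting airtight—tracking simultaneously the clique subtrees and the independent-set subtrees crossing each of the three central edges, and ensuring no edge exceeds $k$—is the delicate part; I would likely need to handle degenerate cases where $r$ has degree $1$ or where subtrees degenerate, and possibly reroute or contract parts of $T$ to reach a clean normal form before counting.
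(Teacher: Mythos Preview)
Your plan is correct and follows essentially the same route as the paper: in the forward direction the paper also applies Helly to locate a common node $v_0$, rules out $v_0$ being a leaf, uses the tight count $s_1+s_2+s_3\ge 3k$ with each $s_i\le k$ to force each clique subtree through exactly one of the three incident edges (giving the partition), and then observes that since those three edges are already saturated by clique subtrees, each $T_w$ with $w\in I(G)$ avoids them and hence lies in a single component of $T\setminus v_0$; in the backward direction the paper likewise builds a three-branch ternary tree rooted at a center $\alpha_0$, makes each $T_w$ for $w\in I(G)$ a single leaf edge (so it never loads a backbone edge), and derives $\mmw(G)\ge k$ from Helly plus pigeonhole exactly as you sketch. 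The only refinement to your outline is that the ``counting airtight'' worry dissolves once you note that the inequalities force each clique subtree to use \emph{exactly one} of $e_1,e_2,e_3$, which simultaneously makes the $C_i$ well-defined and leaves zero slack for any $T_w$ on those edges.
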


\begin{proof}
($\Rightarrow$) Let $C=C(G)$. Assume that $(T, \{T_x\}_{x\in V(G)} )$ is a tree-representation of $G$ in which every edge of $T$ is contained in at most $k$ subtrees, whose existence is ensured by Lemma~\ref{jeongetal}. By Proposition~\ref{lem:helly}, there exists a vertex $v_0 \in \bigcap_{x\in C} V(T_x)$ in $T$. Then $v_0$ cannot be a leaf; otherwise, the unique edge incident with $v_0$ would be contained in $3k$ subtrees $T_x$ for all $x\in C$. Hence, the degree of $v_0$ is $3$, and let $e_1,e_2,e_3$ be the three incident edges in $T$. 
Let $s_i$ be the number of subtrees containing $e_i$ for every $i=1,2,3$. 
Then, $s_i\le k$ for every $i=1,2,3$.
For each $x\in C$, $T_x$ contains at least one edge among $e_1,e_2,e_3$
because it contains $v_0$. Hence, $s_1+s_2+s_3 \geq 3k$ and 
it follows that $s_i= k$  for each $i=1,2,3$.
This also implies that each tree $T_x$ with $x\in C$ contains exactly one edge among $e_1,e_2,e_3$.
Therefore, by defining $C_i:=\{x\in C : e_i \in E(T_x) \}$, one can partition $C$ into $C_1 \cup C_2 \cup C_3$
with $\abs{C_1}=\abs{C_2}=\abs{C_3}= k$.

Note that 
the graph obtained from $T$ by deleting $v_0$
consists of three disconnected components. 
Denote by $T^{(i)}$ the component that is incident with $e_i$ for each $i=1,2,3$. 
For each $w\in I(G)$, $T_w$ contains none of $e_1,e_2,e_3$ because the edges are fully occupied by subtrees corresponding to vertices in $C$. 
Thus, $T_w$ should be entirely contained in $T^{(j)}$ for some $j \in \{1,2,3\}$. 
This implies that, for each $x\in C$, if $V(T_w)\cap V(T_{x})\not = \emptyset$, 
then $x\in C_j$. In other words, $N_G(w) \subseteq C_j$.

{($\Leftarrow$)} Assume that $C(G)$ admits a partition $C_1\cup C_2\cup C_3$ satisfying the property. Let $C_j=\{c^{(j)}_1,c^{(j)}_2,\ldots,c^{(j)}_k\}$ for every $j=1,2,3$.
Moreover, assume that  $I(G)$ is partitioned into $I_1\cup I_2\cup I_3$ so that for each $j=1,2,3$, $N_G(I_j)\subseteq C_j $. Let $I_j=\{i^{(j)}_1,\ldots, i^{(j)}_{\ell_j}\}$ for every $j=1,2,3$.
	For every tree-representation $(T, \{T_x\}_{x\in V(G)} )$, Proposition~\ref{lem:helly} ensures that there exists a vertex $v_0 \in \bigcap_{x\in V(C)} V(T_x)$ in $T$. As $T$ is ternary, $v_0$ is incident with at most three edges, and from the pigeonhole principle, at least one edge should be contained in at least $k$ subtrees. Thus, ${\mmw(G)\geq k}$.
	
Now, we show that $\mmw(G)\leq k$. It is enough to construct a tree-representation $(T, \{T_x\}_{x\in V(G)} )$ of $G$ where every edge in $T$ is contained in at most $k$ subtrees by Lemma~\ref{jeongetal}. 
First, introduce $|V(G)| + (|V(G)|-3) +1$ many vertices $\{\beta_{x}\}_{x\in V(G)}$, $\{\alpha_{x}\}_{x\in \left(V(G)\setminus \{c_k^{(1)},c_k^{(2)},c_k^{(3)}\}\right)}$ and $\alpha_0$.
We construct a ternary tree $T$ as follows:
\begin{enumerate}[(a)]
	\item Build three paths 
	$\alpha_{i^{(j)}_1} \alpha_{i^{(j)}_2} \cdots \alpha_{i^{(j)}_{\ell_j}}\alpha_{c^{(j)}_1}\alpha_{c^{(j)}_2}\cdots \alpha_{c^{(j)}_{k-1}}$ for every $j=1,2,3$.
	\item Join the three paths by adding three edges $\{\alpha_{i^{(j)}_1},\alpha_0 \}$ for every $j=1,2,3$.
	\item For $x\in \left(V(G)\setminus \{c_k^{(1)},c_k^{(2)},c_k^{(3)}\}\right)$, add an edge between $\beta_x$ and $\alpha_x$. In addition, attach $\beta_{c_k^{(j)}}$ to $\alpha_{c_{k-1}^{(j)}}$ for every $j=1,2,3$.
\end{enumerate}
In this way, we obtain a ternary tree whose leaves are $\beta_{x}$'s. 
\begin{figure}
	\centering
	\includegraphics[width=\textwidth]{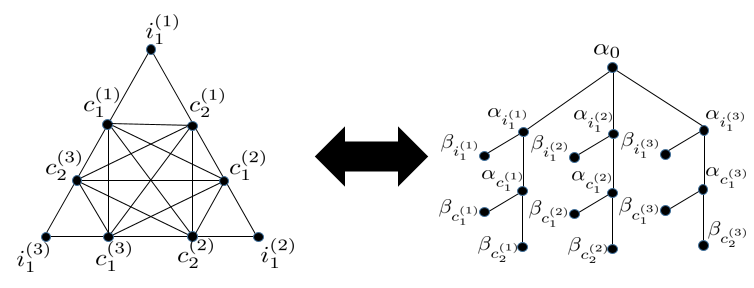}
	\caption[An example of the construction of a tree-representation]{An example of the construction of a tree-representation when $k=2$, $\ell_j=1$ for every $j=1,2,3$.	 For each $w\in I(G)$, $T_w$ is the subtree consisting of a single edge $\{\alpha_w, \beta_w\}$. For each $c \in C(G)$, $T_c$ is the unique path from $\beta_{c}$ to $\alpha_0$.
		\label{fig1}}
\end{figure}
See Figure~\ref{fig1} for an example. 
Now we define the collection $\{T_x\}_{x\in I(G)\cup C(G)}$ of subtrees as follows: 

\begin{itemize}
	\item For each $w\in I(G)$, $T_w$ is the subtree consisting of a single edge $\{\alpha_w, \beta_w\}$.
	\item For each $c \in C(G)$, $T_c$ is the unique path from $\beta_{c}$ to $\alpha_0$. 
\end{itemize}
Then, it is straightforward to check that this construction yields a desired tree-representation.  
\end{proof}

We now introduce a decision problem, which we call \emph{PARTITION-3}: Given a multi-set $S$ (a set in which multiple elements are allowed) of positive integers, the task is to decide whether $S$ can be partitioned into three multi-subsets $S_1\cup S_2 \cup S_3$ such that $\sum_{s\in S_1}s = \sum_{s\in S_2}s = \sum_{s\in S_3}s$.

For instance, when $S=\{ 3,1,1,2,1 \}$, the answer is NO as the sum of elements is not even divisible by $3$; when $S=\{ 4, 4,7 \}$, the answer is NO as every subset containing the element $7$ will exceed a third of the total sum; when $S=\{ 1,1,1, 3, 2,1\}$, the answer is YES as $S_1=\{1,1,1\}$, $S_2=\{3\}$, $S_3=\{2,1\}$ gives a desired partition.

\begin{lemma}
	PARTITION-3 is NP-hard.
\end{lemma}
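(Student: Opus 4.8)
The plan is to reduce from the classical \emph{PARTITION} problem: given a multi-set $S$ of positive integers, decide whether $S$ can be split into two multi-subsets of equal sum. This problem is well known to be NP-complete, so exhibiting a polynomial-time many-one reduction to \emph{PARTITION-3} suffices to establish NP-hardness.

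Given an instance $S$ of \emph{PARTITION}, let $\sigma = \sum_{s\in S} s$. If $\sigma$ is odd, then $S$ is trivially a NO-instance, and I map it to any fixed NO-instance of \emph{PARTITION-3} (for example $\{1,1\}$, whose total sum is not divisible by three). If $\sigma$ is even, I output the multi-set $S' = S \cup \{\sigma/2\}$, obtained by adjoining a single new element equal to $\sigma/2$. This transformation is clearly computable in polynomial time.

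It remains to prove that $S$ is a YES-instance of \emph{PARTITION} if and only if $S'$ is a YES-instance of \emph{PARTITION-3}. For the forward direction, a partition $S = A \cup B$ with $\sum_{a\in A} a = \sum_{b\in B} b = \sigma/2$ immediately yields the three-way partition $S' = A \cup B \cup \{\sigma/2\}$, all three parts summing to $\sigma/2$. For the reverse direction, note that the common part-sum of any solution for $S'$ must be $\left(\sum_{s\in S'} s\right)/3 = \sigma/2$. The adjoined element $\sigma/2$ lies in one of the three parts; since every element is a strictly positive integer and that part already reaches the sum $\sigma/2$ on account of this single element, it can contain no other element. Hence the remaining two parts form a partition of $S$ into two multi-subsets each of sum $\sigma/2$, which is a solution of the original \emph{PARTITION} instance.

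The reduction itself is elementary; the only points requiring care are the parity case distinction (so that $\sigma/2$ is an integer and the target instance is well defined) and the argument that the new element must occupy a part by itself, which relies crucially on all integers being strictly positive. Beyond these, the argument is routine, so I expect no substantial obstacle.
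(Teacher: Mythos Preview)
Your reduction is correct and is essentially the same as the paper's: both adjoin the element $\sigma/2$ to the given \emph{PARTITION} instance and appeal to the obvious correspondence between two-way partitions of $S$ and three-way partitions of $S'$. You are in fact more careful than the paper, since you explicitly handle the parity of $\sigma$ (so that $\sigma/2$ is an integer) and spell out why the adjoined element must sit alone in its part, whereas the paper simply declares the correctness ``straightforward.''
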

\begin{proof}
	We construct a polynomial reduction from \emph{PARTITION} (problem [SP12] in~\cite{garey2002computers}): The instance of \emph{PARTITION} is the same as \emph{PARTITION-3}, and the task is to decide whether a multiset can be partitioned into two multi-subsets of equal sums rather than three. The reduction is constructed as follows: for a given instance $S=\{s_1,\ldots ,s_m\}$ of \emph{PARTITION}, construct an instance of \emph{PARTITION} as $S' = \{s_1,s_2,\ldots,s_m, \frac{1}{2}\sum_{i=1}^m s_i\}$. The correctness of this reduction is  straightforward.
	Because \emph{PARTITION} is known to be NP-hard~\cite{garey2002computers}, \emph{PARTITION-3} is NP-hard.
\end{proof}

Now, we finish the proof.
\begin{theorem}
	Computing the maximum matching width for graphs is NP-hard. In particular, computing the maximum matching width for split graphs is NP-hard.
\end{theorem}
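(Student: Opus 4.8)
The plan is to give a polynomial reduction from PARTITION-3, just shown to be NP-hard, to the problem of deciding whether $\mmw(G)\le k$ for a split graph $G$; since any procedure that computes $\mmw(G)$ settles this decision question and the instances produced are split, both assertions of the theorem follow at once. Given $S=\{s_1,\dots,s_m\}$, I may assume $3\mid\sigma$ where $\sigma:=\sum_i s_i$ (otherwise $S$ is a NO-instance, which I map to a fixed NO-instance), and set $k:=\sigma/3$. I then build $G$ by taking a clique $C$ on $\sigma=3k$ vertices, partitioning it into consecutive blocks $B_1,\dots,B_m$ with $\abs{B_i}=s_i$, and adjoining for each $i$ an independent vertex $w_i$ with $N_G(w_i)=B_i$. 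With $I(G)=\{w_1,\dots,w_m\}$ and $\abs{C(G)}=3k$, the graph $G$ is split and Lemma~\ref{lem:key} applies.

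The crux is the equivalence furnished by Lemma~\ref{lem:key}. A partition of $C$ into parts $C_1,C_2,C_3$ of common size $k$ for which every $N_G(w_i)=B_i$ lies in exactly one part is precisely an assignment of each block $B_i$ --- hence of each integer $s_i$ --- to one of three groups of total size $k$; since $\sigma=3k$, this is exactly a solution of the PARTITION-3 instance $S$. So such a partition exists if and only if $S$ is a YES-instance, and by Lemma~\ref{lem:key} this holds if and only if $\mmw(G)=k$. To phrase the conclusion in terms of the decision problem ``$\mmw(G)\le k$'', I would note that the pigeonhole argument opening the proof of Lemma~\ref{lem:key} uses only that $C$ is a clique on $3k$ vertices: by Proposition~\ref{lem:helly} all $3k$ clique-subtrees share a node $v_0$, which has degree at most three, so some edge at $v_0$ carries at least $k$ subtrees. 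Thus $\mmw(G)\ge k$ holds unconditionally for these graphs, whence ``$\mmw(G)\le k$'' coincides with ``$\mmw(G)=k$'', and the reduction is complete.

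The step I expect to require the most care is the running time of the reduction. The graph $G$ has $\sigma+m$ vertices, which is polynomial in the \emph{unary} length of $S$ but only pseudo-polynomial in its binary length; since PARTITION, and therefore PARTITION-3, is only weakly NP-hard, this blow-up is exactly the point one must confront rather than gloss over, and a fully rigorous argument would either invoke a strongly NP-hard source problem or encode the blocks more economically. Apart from this, the remaining verifications are routine: that $G$ is split with $\abs{C(G)}=3k$, that each $B_i$ is nonempty (so that containment in a single part genuinely forces the blocks to be atomic), and that the hypotheses of Lemma~\ref{lem:key} hold verbatim.
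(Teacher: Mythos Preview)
Your reduction is exactly the one in the paper: take a clique on $\sum_i s_i$ vertices, split it into blocks of sizes $s_1,\dots,s_m$, attach one independent vertex per block, and apply Lemma~\ref{lem:key}; your added care in handling $3\nmid\sigma$ and in deriving the unconditional bound $\mmw(G)\ge k$ to cast things as a decision problem only makes the argument cleaner. The pseudo-polynomial blow-up you flag is a genuine concern and is present, unacknowledged, in the paper's own proof as well---the paper reduces from the weakly NP-hard PARTITION-3 and builds a graph of size $\sum_i s_i$, so strictly speaking neither argument is a polynomial-time reduction as written.
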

\begin{proof}
	The reduction is from \emph{PARTITION-3}. For a given instance $S=\{s_1,\ldots ,s_m\}$, we construct a split graph $G$ as follows: 
	\begin{enumerate}
		\item Consider a complete graph on $\sum_{i=1}^m s_i$ vertices $V$. 
		Partition $V$ into $m$ many subsets $V_1 \cup V_2 \cup \cdots \cup V_m$ so that $|V_i|=s_i$ for each $i=1,2,\ldots ,m$.
		\item Introduce $m$ more vertices $I=\{i_1,\ldots,i_m\}$, and connect $i_j$ to all vertices in $V_j$ for each $j=1,2,\ldots,m$.
	\end{enumerate}
By Lemma~\ref{lem:key}, such constructed split graph has mm-width equal to $\frac{1}{3}\sum_{i=1}^m s_i$ if and only if $V$ can be partitioned into three equal-sized partition $C_1\cup C_2 \cup C_3$ such that for each $w\in I$, $N_G(w)$ is completely contained in one of three partitions. 
By letting $I_\ell := \{ s_w : N_G(w) \subseteq C_{\ell}  \}$ for every $\ell=1,2,3$, one can see that $\sum_{w\in I_\ell } s_w =\frac{1}{3}\sum_{i=1}^m s_i $.
\end{proof}

\section*{Acknowledgements}
We thank Jan Arne Telle, Sigve Hortemo S{\ae}ther, and Yixin Cao for fruitful advice.

\bibliographystyle{abbrv}
\bibliography{mmwnph}

\end{document}